\newtheorem{thm}{Theorem}[section]
\newtheorem{lmm}[thm]{Lemma}
\newcommand{\ee}{\mathbb{E}}
\newcommand{\ra}{\rightarrow}
\newcommand{\rr}{\mathbb{R}}
\newcommand{\smallavg}[1]{\langle #1 \rangle}
\newcommand{\var}{\mathrm{Var}}
\newcommand{\zz}{\mathbb{Z}}
\newcommand{\fpar}[2]{\frac{\partial #1}{\partial #2}}
\newcommand{\mpar}[3]{\frac{\partial^2 #1}{\partial #2 \partial #3}}
\begin{document}
\title[Absence of RSB in random field Ising model]{Absence of replica symmetry breaking in the random field Ising model}
\author{Sourav Chatterjee}
\address{\newline Department of Statistics \newline Stanford University\newline Sequoia Hall, 390 Serra Mall \newline Stanford, CA 94305\newline \newline Email: \textup{\tt souravc@stanford.edu}}
\thanks{Research partially supported by NSF grant DMS-1309618}

\keywords{}
\subjclass[2010]{}

\begin{abstract}
It is shown that replica symmetry is not broken in the random field Ising model in any dimension, at any temperature and field strength, except possibly at a measure-zero set of exceptional temperatures and field strengths.
\end{abstract}

\maketitle

\section{Introduction}
Let $\zz^d$ denote the $d$-dimensional integer lattice. For each $n$, let $V_n := \zz^d \cap [1,n]^d$. The random (Gaussian) field Ising model (RFIM) on $V_n$ with free boundary condition is the random probability measure on the set of spin configurations $\{-1,1\}^{V_n}$ that is proportional to
\[
\exp\biggl(\beta \sum_{\langle x y\rangle}\sigma_x \sigma_y + h\sum_{x} g_x\sigma_x\biggr)\, ,
\]
where $\langle x y\rangle$ below the first sum means that we are summing over $x,y$ that are neighbors, $\beta$ and $h$ are positive numbers that we will call {\it inverse temperature} and {\it field strength}, and $g_x$ are independent standard Gaussian random variables. The variables $g_x$ are collectively called the {\it disorder}. 

The RFIM was introduced in a seminal paper of Imry and Ma \cite{imryma75} as a simple example of a disordered system. The model is often showcased as a notable success story of mathematical physics. In the paper \cite{imryma75}, Imry and Ma gave an intuitive argument that the model does not have an ordered phase in dimensions one and two, but does in dimensions three and higher. There was spirited controversy in the theoretical physics community about this claim for more than ten years after the introduction of this model, with arguments coming both for and against. The question was finally settled in two rigorous papers by Bricmont and Kupianen \cite{bk87, bk88} who proved that there is indeed an ordered phase in dimensions three and higher, and two other rigorous papers by Aizenman and Wehr \cite{aw89, aw90} who proved the nonexistence of an ordered phase in dimension two. (The one dimensional case was settled earlier.) For a readable account of these proofs and an up-to-date survey of the literature, see \cite[Chapter 7]{bovier06}.

However, in spite of all this progress, the RFIM has not yet yielded  an exact solution, and there are many aspects that remain unknown. A popular theoretical physics approach to solving statistical models of disordered systems is the so-called `replica method'. There have been a number of attempts to apply this technique to the RFIM by theoretical physicists (e.g.~in~\cite{my92}), with varying degrees of success.  A crucial step in the replica approach is to understand whether the model under investigation exhibits `replica symmetry breaking'. The purpose of this paper is to investigate this question for the RFIM and rigorously prove that replica symmetry does not break in this model. This lends support to the recent finding in the physics literature that the RFIM does not have a spin glass phase \cite{krz10, krsz11} and provides partial justification for the replica symmetric solution of this model. 

It is worth stressing, however, that even from the point of view of the physicists, the replica symmetric solution to the RFIM is not the final answer. There is a general claim, not yet verified rigorously, that the RFIM has a large number (increasing with the system size) of so-called `metastable' states. These metastable states play an important role, but how to identify and count them is still not clear (other than in some simple cases, see e.g.~the case of RFIM on random graphs discussed in~\cite{krz10}).

Let me now briefly explain what is meant by the absence of replica symmetry breaking. Given a fixed realization of the disorder, the probability measure defined above is called the Gibbs measure of the system.  Suppose that two spin configurations $\sigma^1$ and $\sigma^2$ are drawn independently from this random Gibbs measure. These are called {\it replicas}. The {\it overlap} between these two replicas is defined as
\[
R_{1,2} := \frac{1}{|V_n|} \sum_x \sigma_x^1 \sigma_x^2\, .
\]
The system is said to exhibit replica symmetry breaking (e.g., according to Parisi \cite{parisi02}) if the limiting distribution of the random variable $R_{1,2}$, when $n\ra\infty$, has more than one point in its support. The following theorem shows that this does not happen for the RFIM. 
\begin{thm}\label{rsbthm}
There is a set $A \subseteq (0,\infty)^2$ of Lebesgue measure zero such that for any $(\beta, h)\not\in A$, there exists a constant $q_{\beta,h}\in [-1,1]$ such that
\[
\lim_{n\ra\infty} \ee\langle (R_{1,2}-q_{\beta, h})^2\rangle = 0\, ,
\]
where $\langle \cdot \rangle$ denotes averaging with respect to the Gibbs measure and $\ee$ is expectation with respect to the disorder. Moreover for every $\beta$, the set of all $h$ such that $(\beta,h)\in A$ is countable.
\end{thm}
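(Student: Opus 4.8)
The plan is to read off the order parameter $q_{\beta,h}$ from the $h$-derivative of the quenched pressure, and then to reduce the vanishing of the overlap fluctuations to a super-concentration statement about the free energy. Write $Z_n=Z_n(\beta,h)$ for the partition function, $p_n(\beta,h)=\frac1{|V_n|}\ee\log Z_n$ for the quenched pressure, and $m_x=\langle\sigma_x\rangle$. Gaussian integration by parts gives $\ee[g_x m_x]=h\,\ee[1-m_x^2]$, so that
\[
\fpar{p_n}{h}=\frac1{|V_n|}\ee\bigavg{\sum_x g_x\sigma_x}=h\Bigl(1-\frac1{|V_n|}\sum_x\ee[m_x^2]\Bigr)=h\bigl(1-\ee\langle R_{1,2}\rangle\bigr),
\]
the last equality because $\frac1{|V_n|}\sum_x m_x^2=\langle R_{1,2}\rangle$ on drawing two replicas. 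Since $\log Z_n$ is the logarithm of a sum of exponentials of functions jointly affine in $(\beta,h)$, $p_n$ is jointly convex; and by the standard subadditivity argument for free boundary conditions (superadditive up to a surface term), $p_n\to p$ pointwise for a jointly convex limit $p$.

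\textbf{The order parameter on a full-measure set.} Fix $\beta$. The convex function $h\mapsto p(\beta,h)$ is differentiable off a countable set $A_\beta\subset(0,\infty)$, and at every $h\notin A_\beta$ convexity forces $\partial_h p_n(\beta,h)\to\partial_h p(\beta,h)$; by the identity above this means $\ee\langle R_{1,2}\rangle\to q_{\beta,h}:=1-h^{-1}\partial_h p(\beta,h)$, which lies in $[-1,1]$ because $R_{1,2}$ does. Setting $A:=\{(\beta,h):h\in A_\beta\}$, the slice $\{h:(\beta,h)\in A\}=A_\beta$ is countable for every $\beta$, and $A$ has Lebesgue measure zero by Fubini. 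With $q_n:=\ee\langle R_{1,2}\rangle$ it now remains only to prove $\ee\langle R_{1,2}^2\rangle-q_n^2\to0$ for every $(\beta,h)\notin A$.

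\textbf{Splitting the fluctuations.} Expanding over replicas,
\[
\ee\langle R_{1,2}^2\rangle-q_n^2=\frac1{|V_n|^2}\sum_{x,y}\ee\bigl[\langle\sigma_x\sigma_y\rangle^2-m_x^2m_y^2\bigr]+\var\Bigl(\tfrac1{|V_n|}\textstyle\sum_x m_x^2\Bigr).
\]
The first, thermal term equals $\ee\langle(R_{1,2}-\langle R_{1,2}\rangle)^2\rangle$; from the FKG inequality $\langle\sigma_x\sigma_y\rangle\ge m_xm_y$ and $|\langle\sigma_x\sigma_y\rangle+m_xm_y|\le2$ one gets $\langle\sigma_x\sigma_y\rangle^2-m_x^2m_y^2\le2(\langle\sigma_x\sigma_y\rangle-m_xm_y)$, so this term is $\le2\,\ee\langle(M_n-\langle M_n\rangle)^2\rangle$ with $M_n:=\frac1{|V_n|}\sum_x\sigma_x$. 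The second, disorder term is the quenched variance of $\frac1{|V_n|}\sum_x m_x^2=h^{-2}\cdot\frac1{|V_n|}\|\nabla_{\mathbf{g}}\log Z_n\|^2$, i.e.\ the fluctuation of the squared gradient norm of the free energy. Thus everything reduces to showing, for $(\beta,h)\notin A$, that $\ee\langle(M_n-\langle M_n\rangle)^2\rangle\to0$ and that $\frac1{|V_n|}\|\nabla_{\mathbf{g}}\log Z_n\|^2$ concentrates.

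\textbf{The core estimate, and the main obstacle.} Both remaining claims are \emph{super}-concentration statements: they follow neither from convexity of $p$ alone (which only controls first derivatives) nor from the naive Lipschitz/Poincar\'e bounds, which here lose a full factor of $|V_n|$. The inputs available are (i) Gaussian concentration, which via $\|\nabla_{\mathbf{g}}\log Z_n\|^2=h^2\sum_x m_x^2\le h^2|V_n|$ gives $\var\bigl(\frac1{|V_n|}\log Z_n\bigr)=O(1/|V_n|)$, hence $\frac1{|V_n|}\log Z_n\to p$ almost surely; (ii) convexity of $h\mapsto\frac1{|V_n|}\log Z_n$, which upgrades (i) to a.s.\ convergence of $\frac1{|V_n|}\partial_h\log Z_n$ for $h\notin A_\beta$; and (iii) the ferromagnetic monotonicity $\partial m_x/\partial g_y=h(\langle\sigma_x\sigma_y\rangle-m_xm_y)\ge0$. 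The hard part is to combine (i)--(iii) into the conclusion that the thermal variance of $M_n$ and the squared gradient norm both concentrate: this is where the i.i.d.\ product structure of the disorder and the correlation inequalities must genuinely be used --- e.g.\ through a martingale decomposition of $\log Z_n$ over the sites that bounds each quantity by the $O(1/|V_n|)$ free-energy variance, with care taken that the resulting exceptional set of $h$ stays countable. Carrying out this step rigorously is the crux of the proof; the preceding reductions are comparatively soft.
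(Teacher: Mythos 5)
Your soft reductions are all correct and essentially match the paper's setup: identifying $q_{\beta,h}=1-h^{-1}\partial_h p(\beta,h)$ on the differentiability set via convexity, the countable/measure-zero structure of $A$, the split of $\ee\smallavg{R_{1,2}^2}-q_n^2$ into the thermal variance $\ee(\smallavg{R_{1,2}^2}-\smallavg{R_{1,2}}^2)$ plus the disorder variance of $\smallavg{R_{1,2}}$, and the FKG bound $\smallavg{\sigma_x\sigma_y}^2-\smallavg{\sigma_x}^2\smallavg{\sigma_y}^2\le 2(\smallavg{\sigma_x\sigma_y}-\smallavg{\sigma_x}\smallavg{\sigma_y})$. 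But you stop exactly where the proof actually begins: you declare the two remaining ``super-concentration'' claims to be the crux and prove neither. That is a genuine gap, not a detail. For the thermal term the paper's argument (its key Lemma) is: set $F=\log Z_n-\ee\log Z_n$ and project $F$ onto the orthonormal family $\{g_xg_y\}_{x\ne y}$ in the second Gaussian chaos; Bessel's inequality gives $\sum_{x\ne y}(\ee(g_xg_yF))^2\le\var(F)\le h^2|V_n|$, while Gaussian integration by parts identifies $\ee(g_xg_yF)=h^2\,\ee(\smallavg{\sigma_x\sigma_y}-\smallavg{\sigma_x}\smallavg{\sigma_y})$. FKG makes these nonnegative, so Cauchy--Schwarz upgrades the $\ell^2$ bound to $\sum_{x,y}\ee(\smallavg{\sigma_x\sigma_y}-\smallavg{\sigma_x}\smallavg{\sigma_y})=O(|V_n|^{3/2})$, i.e.\ the thermal term is $O(|V_n|^{-1/2})$. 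Your ``martingale decomposition over the sites'' gestures at something of this flavor, but without the explicit projection identity and the FKG-plus-Cauchy--Schwarz step there is no proof.

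The disorder term is the more serious omission, because the route you propose --- direct concentration of $|V_n|^{-1}\sum_x\smallavg{\sigma_x}^2$ --- is not what the paper does, is not obviously attainable from the inputs you list, and you give no mechanism for it. The paper never proves that quantity concentrates directly. Instead it establishes the Ghirlanda--Guerra identities at every differentiability point of $p$ (this needs both the convexity argument for the disorder fluctuation of $\smallavg{H_n}$ and a second chaos-projection lemma, on the fourth Hermite chaos, to control the thermal fluctuation of $H_n$), and then runs Guerra's replica computation: taking $k=2$, $f=R_{1,2}$ and $k=3$, $f=R_{2,3}$ in the identities, using replica symmetry and $\smallavg{R_{2,3}R_{1,4}}=\smallavg{R_{1,2}}^2$, one finds $\ee\smallavg{R_{1,2}^2}-(\ee\smallavg{R_{1,2}})^2=\tfrac32\bigl(\ee\smallavg{R_{1,2}^2}-\ee(\smallavg{R_{1,2}}^2)\bigr)+o(1)$, so the full variance is controlled by the thermal variance alone and the disorder fluctuation of $\smallavg{R_{1,2}}$ comes out as a byproduct. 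Without either this replica argument or an actual proof of your proposed gradient-norm concentration, the theorem is not established.
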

The rest of this paper is devoted to the proof of the above theorem. The proof depends on two key ingredients: the FKG property of the RFIM, and the Ghirlanda-Guerra identities. The proof should extend  easily to any other model that has these two properties.

\section{Proof}
For every $\beta$, $h$ and $n$, define
\[
Z_n(\beta, h) := \sum_{\sigma\in \{-1,1\}^{V_n}}\exp\biggl(\beta \sum_{\langle x y\rangle}\sigma_x \sigma_y + h\sum_{x} g_x\sigma_x\biggr)\, . 
\]
Let $F_n(\beta, h) := \log Z_n(\beta,h)$. Let 
\[
\psi_n(\beta,h) := \frac{F_n(\beta, h)}{|V_n|}
\]
and $p_n(\beta,h) := \ee(\psi_n(\beta,h))$.
\begin{lmm}\label{limlmm}
For every positive $\beta$ and $h$, the limit 
\[
p(\beta,h) := \lim_{n\ra\infty} p_n(\beta,h)
\]
exists and is finite.
\end{lmm}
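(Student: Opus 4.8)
The plan is to prove the limit exists by establishing that the sequence $\{p_n(\beta,h)\}$ is approximately superadditive (or subadditive) along a natural subdivision of the boxes $V_n$, so that a Fekete-type argument applies. First I would pass from the irregular boxes $V_n = \zz^d \cap [1,n]^d$ to a cleaner subsequence, or work directly with side lengths that are multiples of a fixed integer; the point is to be able to tile a large box $V_{n}$ by roughly $k^d$ disjoint translated copies of a smaller box $V_{m}$ (with $n \approx km$). Since the model has free boundary conditions, deleting the edges that straddle the boundaries between these sub-boxes can only decrease the interaction energy term $\beta\sum_{\langle xy\rangle}\sigma_x\sigma_y$ at fixed $\sigma$; this is the standard observation that free boundary conditions give superadditivity of $F_n$ in the non-random part. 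The disorder term $h\sum_x g_x\sigma_x$ splits exactly as a sum over the sub-boxes. Hence, pointwise in the disorder, $F_n(\beta,h) \ge \sum_{i} F_{m}^{(i)}(\beta,h) - (\text{error})$, where $F_m^{(i)}$ is the free energy of the RFIM on the $i$-th translated sub-box (with its own disorder variables, which are i.i.d.\ copies of the original family), and the error counts the removed boundary edges, of which there are $O(k^d m^{d-1})$, i.e.\ $o(|V_n|)$ relative to $|V_n| \approx k^d m^d$.

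Taking expectations over the disorder, and using that each $F_m^{(i)}$ has the same law as $F_m(\beta,h)$ by translation invariance of the i.i.d.\ field, gives
\[
p_n(\beta,h) \ge p_m(\beta,h) - \frac{C_d \beta}{m} - o(1)
\]
for $n$ in the appropriate range relative to $m$. I would also need a matching upper bound to control the $o(1)$ and the non-divisible remainder: one bounds $F_n$ above by adding back the boundary edges (each contributing at most $\beta$ to the energy) and filling out the leftover shell of $V_n$ not covered by full copies of $V_m$, whose volume is $o(|V_n|)$; for that leftover region one needs an a priori two-sided bound on $\psi_n$, which is immediate since $\beta\sum_{\langle xy\rangle}\sigma_x\sigma_y$ is bounded by $\beta d |V_n|$ in absolute value and, for the disorder term, $\ee\log\sum_\sigma e^{h\sum_x g_x\sigma_x} $ is controlled by Jensen from above ($\le |V_n|\log 2 + \tfrac12 h^2|V_n|$) and from below by evaluating the sum at a single well-chosen configuration (the sign pattern of the $g_x$), giving $\ee F_n \ge h\ee\sum_x|g_x| = c h |V_n|$. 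Thus $p_n$ is bounded, and standard manipulation of the approximate superadditivity inequality — take $n\to\infty$ along a subsequence to get $\liminf$, then let $m\to\infty$ — yields that $\limsup_n p_n \le \liminf_n p_n$, so the limit exists and is finite.

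The main obstacle, and the only place requiring real care, is the bookkeeping for boxes whose side length is not a multiple of $m$: one must absorb the leftover shell and the removed boundary edges into genuinely $o(1)$ error terms uniformly, which is why the a priori $O(1)$ bounds on $\psi_n$ (both above and below, including the negative contribution the disorder could in principle make) are needed before the Fekete argument can be closed. Everything else is the classical free-boundary superadditivity argument for lattice spin systems, adapted to carry the i.i.d.\ disorder through the expectation; the self-averaging or concentration of $\psi_n$ around $p_n$ is \emph{not} needed for this lemma (it will presumably be used later), only the convergence of the means.
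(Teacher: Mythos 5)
Your argument is essentially correct but takes a different route from the paper. You run the classical Fekete-type approximate-superadditivity argument: tile $V_n$ by copies of $V_m$ with $n\approx km$, control the leftover shell and the deleted crossing bonds, and use a priori two-sided bounds on $p_n$ to close the $\liminf/\limsup$ sandwich. The paper instead compares $p_n$ and $p_m$ only through the exactly divisible box $V_l$ with $l=mn$: since $V_{mn}$ partitions exactly into $m^d$ translates of $V_n$ (and into $n^d$ translates of $V_m$), the two-sided bound $|\log Z_l'-\log Z_l|\le \beta L$ on the effect of deleting the $L=O(n^{d-1}m^d)$ crossing bonds gives $|p_n-p_l|\le C(d)\beta n^{-1}$ and $|p_m-p_l|\le C(d)\beta m^{-1}$ directly, so $(p_n)$ is Cauchy. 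This avoids all of the bookkeeping you correctly identify as the delicate part of your approach (non-divisible side lengths, the leftover shell, and the a priori bounds on $\psi_n$, which the paper needs only in the form of Jensen's inequality to see that the limit is finite). Your route is more general (it works for arbitrary sequences of boxes), at the cost of more work; both are standard and valid.

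One caveat: your assertion that deleting the straddling edges ``can only decrease the interaction energy term at fixed $\sigma$'' is false here. A deleted term $\beta\sigma_x\sigma_y$ can be negative, and since the external field $hg_x$ takes both signs, no Griffiths-type monotonicity of $\log Z$ in the couplings is available for the RFIM; exact superadditivity of $F_n$ is not known in this setting. This does not damage your proof, because the inequality you actually use carries the error term $\beta L$ coming from the trivial per-edge bound $|\beta\sigma_x\sigma_y|\le\beta$, which is valid in both directions and is exactly what the paper uses; but the superadditivity justification should be replaced by that trivial bound.
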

The proof of this lemma is quite standard, but may be hard to find. For the sake of completeness, a proof is given below.
\begin{proof}
Fix $\beta$ and $h$. Let $Z_n = Z_n(\beta,h)$, $F_n = F_n(\beta,h)$ and $p_n=p_n(\beta,h)$. Take any two integers $m$ and $n$, and let $l:=mn$. Then the vertices of the box $V_l$ may be partitioned into $m^d$ disjoint translates of $V_n$ that we will call `blocks'. Define a new Hamiltonian on $V_l$ by taking the old one and deleting the terms $\sigma_x\sigma_y$ when $x$ and $y$ belong to different blocks. Let $Z_l'$ be the partition function for this new Hamiltonian. Then clearly $Z_l'/Z_l$ is bounded above and below by $e^{\beta L}$ and $e^{-\beta L}$ respectively, where $L$ is the number of deleted bonds. Note that $L\le C(d) n^{d-1}m^d$, where $C(d)$ is a constant that depends on $d$ only. Thus,
\[
|\log Z_l' - \log Z_l|\le C(d) \beta n^{d-1}m^d\,.
\]
But $Z_l'$ factorizes into $m^d$ independent terms, each of which has the same distribution as $Z_n$. Thus, $\ee(\log Z_l')= m^d \ee(F_n)$. By the above inequality, this shows that
\begin{align*}
|p_n - p_l| &= |n^{-d} \ee(F_n) - l^{-d} \ee(F_l)| \\
&= l^{-d}|m^d \ee(F_n)-\ee(F_l)|\le l^{-d} C(d) \beta n^{d-1} m^d = C(d)\beta n^{-1}\, .
\end{align*}
By the symmetry of the situation, $|p_m-p_l|\le C(d)\beta m^{-1}$. Combining, we get $|p_m-p_n|\le C(d)\beta(m^{-1}+n^{-1})$. This shows that the sequence $p_n$ is Cauchy and hence convergent. To show that the limit is finite, simply apply Jensen's inequality while trying to evaluate the expectation in $p_n = n^{-d}\ee(F_n)$. 
\end{proof}

The next lemma is also standard. 
\begin{lmm}\label{convlmm}
The limit $p(\beta, h)$ defined in Lemma \ref{limlmm} is a convex function of $h$ for every fixed $\beta$. The same is true for $F_n$, $\psi_n$ and $p_n$. 
\end{lmm}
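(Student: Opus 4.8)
The cleanest route is to first establish convexity for the finite-volume object $F_n(\beta,h)$ directly from the definition, and then pass the property up the chain $F_n \to \psi_n \to p_n \to p$ by operations that preserve convexity in $h$. So the plan is: first I would fix $\beta$ and a realization of the disorder $(g_x)$, and compute $\partial^2 F_n/\partial h^2$. Writing $H(\sigma) = \beta\sum_{\langle xy\rangle}\sigma_x\sigma_y + h\sum_x g_x\sigma_x$, so that $F_n = \log\sum_\sigma e^{H(\sigma)}$, the first derivative in $h$ is $\partial_h F_n = \langle S\rangle$ where $S(\sigma) := \sum_x g_x\sigma_x$ and $\langle\cdot\rangle$ is the Gibbs average; differentiating again gives $\partial_h^2 F_n = \langle S^2\rangle - \langle S\rangle^2 = \var_{\langle\cdot\rangle}(S) \ge 0$. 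This is the standard fact that a log-partition-function (cumulant generating function) is convex in any parameter that enters the Hamiltonian linearly, and here $h$ multiplies the linear term $\sum_x g_x\sigma_x$. Hence $h\mapsto F_n(\beta,h)$ is convex for every fixed $\beta$ and every fixed disorder realization.

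Next I would propagate this. Since $\psi_n(\beta,h) = F_n(\beta,h)/|V_n|$ is a positive constant times $F_n$, it is convex in $h$ as well. Then $p_n(\beta,h) = \ee(\psi_n(\beta,h))$ is an expectation (i.e.\ a nonnegative mixture) of functions that are each convex in $h$, and convexity is preserved under such averaging: for $h = \lambda h_1 + (1-\lambda)h_2$ with $\lambda\in[0,1]$, apply the pointwise inequality $\psi_n(\beta,h)\le \lambda\psi_n(\beta,h_1) + (1-\lambda)\psi_n(\beta,h_2)$ and take $\ee$ of both sides. Finally, $p(\beta,h) = \lim_{n\to\infty} p_n(\beta,h)$ exists and is finite by Lemma \ref{limlmm}, and a pointwise limit of convex functions is convex (the defining midpoint/chord inequality passes to the limit). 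This completes all four assertions.

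There is really no serious obstacle here; the only point requiring a word of care is justifying differentiation under the (finite) sum defining $F_n$ — but since $\{-1,1\}^{V_n}$ is a finite set, $F_n$ is a finite sum of smooth functions of $h$ and term-by-term differentiation is immediate, so the Gibbs-variance computation is fully rigorous. If one prefers to avoid even writing derivatives, an alternative is to verify the chord inequality for $F_n$ directly via Hölder's inequality: with $h = \lambda h_1 + (1-\lambda)h_2$, the weight $e^{H_h(\sigma)}$ factors as $\bigl(e^{H_{h_1}(\sigma)}\bigr)^{\lambda}\bigl(e^{H_{h_2}(\sigma)}\bigr)^{1-\lambda}$ (the $\beta$-term contributes $e^{\beta\sum\sigma_x\sigma_y} = (e^{\beta\sum\sigma_x\sigma_y})^{\lambda}(e^{\beta\sum\sigma_x\sigma_y})^{1-\lambda}$ since it does not involve $h$), and Hölder's inequality with exponents $1/\lambda$ and $1/(1-\lambda)$ gives $Z_n(\beta,h)\le Z_n(\beta,h_1)^{\lambda}Z_n(\beta,h_2)^{1-\lambda}$; taking logarithms yields convexity of $F_n$ in $h$ without any smoothness input. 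Either argument is short; I would present the variance computation as the main line since it is the most transparent.
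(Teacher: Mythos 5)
Your proposal is correct and follows essentially the same route as the paper: differentiate $F_n$ twice in $h$, recognize the second derivative as the Gibbs variance of $\sum_x g_x\sigma_x$ (hence nonnegative), and pass convexity to $\psi_n$, $p_n$, and $p$ by scaling, expectation, and pointwise limits. The extra Hölder argument is a nice optional alternative but not needed.
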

\begin{proof}
It is easy to differentiate $F_n$ twice in $h$ and show that the derivative is positive, which implies convexity of $F_n$ as a function of $h$. This shows that $p_n$, and therefore $p$, is convex. 
\end{proof}
Let $A$ be the set of all $(\beta,h)\in (0,\infty)^2$ such that the function $p$ is not differentiable in $h$ at the point $(\beta, h)$. 
\begin{lmm}\label{difflmm}
The set $A$ has Lebesgue measure zero. Moreover, for every $\beta$, the set of all $h$ such that $(\beta,h)\in A$ is countable.
\end{lmm}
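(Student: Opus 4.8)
The plan is to exploit the convexity of $p(\beta,\cdot)$ established in Lemma \ref{convlmm} together with the classical fact that a convex function on an interval is differentiable outside a countable set. First I would fix $\beta>0$ and consider the map $h\mapsto p(\beta,h)$, which is convex and finite on $(0,\infty)$ by Lemmas \ref{limlmm} and \ref{convlmm}. A convex function on an open interval has left and right derivatives everywhere, both of which are monotone; consequently it fails to be differentiable only at the points where the left and right derivatives disagree, i.e.\ at the jump points of a monotone function, and there are at most countably many such points. Hence for each fixed $\beta$ the slice $\{h : (\beta,h)\in A\}$ is countable, which is the second assertion.

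For the Lebesgue measure statement I would invoke Fubini's theorem. The set $A\subseteq(0,\infty)^2$ is measurable (for instance because $p$ is a pointwise limit of the continuous functions $p_n$, hence Borel, and the set of non-differentiability points of a convex function can be written in terms of countably many difference quotients), so
\[
|A| = \int_0^\infty \bigl|\{h : (\beta,h)\in A\}\bigr|\, d\beta = \int_0^\infty 0 \, d\beta = 0\, ,
\]
since each vertical slice is countable and therefore has one-dimensional Lebesgue measure zero. This gives the first assertion.

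The only mild obstacle is the measurability of $A$, which must be checked before Fubini can be applied; this is routine since $p$ is Borel measurable (being a pointwise limit of continuous functions $p_n$, themselves real-analytic in $h$) and the non-differentiability set of a convex function $f$ can be expressed as $\bigcup_{\ve\in\qq_{>0}} \{h : D^+ f(h) - D^- f(h) \ge \ve\}$ with the one-sided derivatives realized as monotone limits of rational difference quotients, all of which preserves joint measurability in $(\beta,h)$. I do not anticipate any real difficulty beyond this bookkeeping.
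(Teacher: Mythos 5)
Your argument is correct and follows essentially the same route as the paper: countability of the non-differentiability set of the convex function $h\mapsto p(\beta,h)$ for each fixed $\beta$, followed by Fubini's theorem to conclude that $A$ has planar Lebesgue measure zero. The measurability check you flag is a reasonable addition that the paper leaves implicit.
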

\begin{proof}
For fixed $\beta$, $p$ is a convex function of $h$ by Lemma \ref{convlmm}. It is well known (and easy to prove) that the set of points at which a convex function on $(0,\infty)$ is not differentiable must be countable. This proves the second claim of the lemma. For the first, use the second, together with a simple application of Fubini's theorem. 
\end{proof}
The following lemma is also quite standard; as before, a proof is provided for the sake of completeness.
\begin{lmm}\label{varbd}
For any $\beta$, $h$ and $n$, 
\[
\var (F_n(\beta, h)) \le h^2 |V_n|\, .
\]
\end{lmm}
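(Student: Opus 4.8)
The plan is to control the fluctuations of $F_n$ by the Gaussian Poincaré (Efron–Stein) inequality, exploiting the fact that $F_n$ depends on the disorder $(g_x)_{x\in V_n}$ only through a single Gaussian coordinate at each site. Concretely, since the $g_x$ are independent standard Gaussians, the Gaussian Poincaré inequality gives
\[
\var(F_n(\beta,h)) \le \ee\sum_{x\in V_n} \Bigl(\fpar{F_n}{g_x}\Bigr)^2\, .
\]
So the whole task reduces to bounding each partial derivative $\partial F_n/\partial g_x$ pointwise.

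The key step is the explicit computation of this derivative. Differentiating $F_n = \log Z_n$ in $g_x$, the $h\sum_y g_y\sigma_y$ term in the exponent produces
\[
\fpar{F_n}{g_x} = h\, \frac{1}{Z_n}\sum_\sigma \sigma_x \exp\biggl(\beta\sum_{\langle yz\rangle}\sigma_y\sigma_z + h\sum_y g_y\sigma_y\biggr) = h\, \langle \sigma_x\rangle\, ,
\]
where $\langle\cdot\rangle$ is the Gibbs average. Since $\sigma_x\in\{-1,1\}$, we have $|\langle\sigma_x\rangle|\le 1$, hence $(\partial F_n/\partial g_x)^2 \le h^2$ for every $x$ and every realization of the disorder.

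Plugging this bound into the Poincaré inequality immediately yields
\[
\var(F_n(\beta,h)) \le \ee\sum_{x\in V_n} h^2 = h^2 |V_n|\, ,
\]
which is the claimed estimate. I do not anticipate a genuine obstacle here: the only point requiring a little care is invoking the Gaussian Poincaré inequality, which is valid because $F_n$ is a smooth function of the $g_x$ with derivatives growing at most polynomially (indeed boundedly, after the computation above), so all the relevant integrability conditions hold trivially. One could alternatively avoid the Poincaré inequality altogether and argue via a direct interpolation/Gaussian-integration-by-parts argument, but the Poincaré route is the cleanest.
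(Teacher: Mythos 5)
Your proof is correct and is essentially identical to the paper's: both invoke the Gaussian Poincar\'e inequality, compute $\partial F_n/\partial g_x = h\smallavg{\sigma_x}$, and bound $\smallavg{\sigma_x}^2\le 1$. (The parenthetical ``Efron--Stein'' label is a slight misnomer --- the inequality used is the Gaussian Poincar\'e inequality --- but this does not affect the argument.)
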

\begin{proof}
Note that by the Poincar\'e inequality for the Gaussian measure \cite[p.~49]{ledoux01}, 
\begin{align*}
\var (F_n(\beta, h))\le \sum_{x\in V_n} \ee\biggl(\fpar{F_n(\beta,h)}{g_x}\biggr)^2 = h^2\sum_{x\in V_n} \ee\smallavg{\sigma_x}^2
\end{align*}
Since $\smallavg{\sigma_x}^2\le 1$ for all $x$, this completes the proof of the lemma. 
\end{proof}
The next lemma records that the RFIM has the FKG property.
\begin{lmm}\label{fkg}
Let $f$ and $g$ be two monotone increasing functions on the configuration space $\{-1,1\}^{V_n}$. Then $\smallavg{fg} \ge \smallavg{f}\smallavg{g}$. In particular, for any $x$ and $y$, $\smallavg{\sigma_x\sigma_y}\ge \smallavg{\sigma_x}\smallavg{\sigma_y}$. 
\end{lmm}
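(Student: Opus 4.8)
The plan is to reduce the statement to the classical FKG theorem of Fortuin, Kasteleyn and Ginibre. Recall that a strictly positive probability measure $\mu$ on the distributive lattice $\{-1,1\}^{V_n}$, ordered coordinatewise, satisfies $\mu(fg)\ge\mu(f)\mu(g)$ for all monotone increasing $f,g$ as soon as it satisfies the FKG lattice condition $\mu(\sigma\vee\tau)\,\mu(\sigma\wedge\tau)\ge\mu(\sigma)\,\mu(\tau)$ for all configurations $\sigma,\tau$, where $\vee$ and $\wedge$ denote the coordinatewise maximum and minimum. Since the Gibbs weight $w(\sigma):=\exp\bigl(\beta\sum_{\langle xy\rangle}\sigma_x\sigma_y + h\sum_x g_x\sigma_x\bigr)$ is strictly positive on all of $\{-1,1\}^{V_n}$, the whole task reduces to verifying this lattice condition, equivalently the supermodularity of the Hamiltonian $H(\sigma):=\beta\sum_{\langle xy\rangle}\sigma_x\sigma_y + h\sum_x g_x\sigma_x$, namely $H(\sigma\vee\tau)+H(\sigma\wedge\tau)\ge H(\sigma)+H(\tau)$.

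First I would observe that the external-field part drops out entirely: for each $x$ one has $(\sigma\vee\tau)_x+(\sigma\wedge\tau)_x=\sigma_x+\tau_x$, so the term $h\sum_x g_x\sigma_x$ contributes exactly the same amount to both sides, regardless of the signs or the values of the disorder variables $g_x$. It then suffices to prove supermodularity bond by bond, i.e.\ that $(\sigma\vee\tau)_x(\sigma\vee\tau)_y+(\sigma\wedge\tau)_x(\sigma\wedge\tau)_y\ge\sigma_x\sigma_y+\tau_x\tau_y$ for every edge $\langle xy\rangle$. This is a finite verification over the possible values of $(\sigma_x,\tau_x,\sigma_y,\tau_y)\in\{-1,1\}^4$: if $\sigma$ and $\tau$ agree at $x$, or agree at $y$, then both sides are equal, and in the remaining case (disagreement at both endpoints) the left side equals $2$ while the right side is either $2$ or $-2$. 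Hence $H$ is supermodular, $w$ is log-supermodular, and the FKG lattice condition holds.

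Granting the FKG theorem — which may be invoked as a black box, or proved from scratch by induction on $|V_n|$ via Holley's inequality or a monotone coupling argument — we obtain $\smallavg{fg}\ge\smallavg{f}\smallavg{g}$ for all monotone increasing $f,g$. The final assertion follows by specialization: for fixed $x$ the coordinate map $\sigma\mapsto\sigma_x$ is monotone increasing (it depends on the single coordinate $\sigma_x$ and is increasing in it), and likewise $\sigma\mapsto\sigma_y$, so taking $f(\sigma)=\sigma_x$ and $g(\sigma)=\sigma_y$ gives $\smallavg{\sigma_x\sigma_y}\ge\smallavg{\sigma_x}\smallavg{\sigma_y}$. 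The only substantive step is the supermodularity check, and even that is routine; the point worth stressing is that the disorder term, being linear in $\sigma$, never obstructs the lattice condition, so the FKG property holds for every realization of the field and for all $\beta,h>0$.
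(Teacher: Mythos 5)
Your proof is correct and follows the same route as the paper: the paper simply asserts that the Gibbs measure satisfies the FKG lattice condition of Fortuin--Kasteleyn--Ginibre for any realization of the disorder, and you have supplied the routine verification (supermodularity of the Hamiltonian, with the linear field term cancelling identically) that the paper leaves to the reader.
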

\begin{proof}
It is simple to verify that for any value of the disorder, the Gibbs measure of the RFIM satisfies the FKG lattice condition \cite{fkg71}, which implies positive correlations between increasing functions.  
\end{proof}
The next lemma is the key ingredient in the proof of Theorem \ref{rsbthm} and the main original component of this paper. The proof makes a crucial use of the FKG property of the RFIM.
\begin{lmm}\label{mainlmm}
For any $n$ and any $(\beta,h)\in (0,\infty)^2$, 
\[
\ee(\smallavg{R_{1,2}^2} - \smallavg{R_{1,2}}^2) \le \frac{2\sqrt{2+h^2}}{h\sqrt{|V_n|}}\, .
\]
\end{lmm}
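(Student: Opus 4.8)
The plan is to bound the Gibbs variance of the overlap by the Gibbs variance of the total magnetization $M_n:=\sum_{x\in V_n}\sigma_x$ — this is where the FKG property of Lemma \ref{fkg} enters — and then to estimate the disorder-average of the magnetization variance by Gaussian integration by parts together with the crude bound $|M_n|\le |V_n|$.

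Since the two replicas are independent draws from the same random Gibbs measure, one has $\smallavg{R_{1,2}^2}=|V_n|^{-2}\sum_{x,y}\smallavg{\sigma_x\sigma_y}^2$ and $\smallavg{R_{1,2}}^2=|V_n|^{-2}\sum_{x,y}\smallavg{\sigma_x}^2\smallavg{\sigma_y}^2$, hence
\[
\smallavg{R_{1,2}^2}-\smallavg{R_{1,2}}^2=\frac{1}{|V_n|^2}\sum_{x,y\in V_n}\bigl(\smallavg{\sigma_x\sigma_y}^2-\smallavg{\sigma_x}^2\smallavg{\sigma_y}^2\bigr).
\]
Now factor each summand as $\bigl(\smallavg{\sigma_x\sigma_y}-\smallavg{\sigma_x}\smallavg{\sigma_y}\bigr)\bigl(\smallavg{\sigma_x\sigma_y}+\smallavg{\sigma_x}\smallavg{\sigma_y}\bigr)$. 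By Lemma \ref{fkg} the first factor is nonnegative, while the second is at most $2$ since all one- and two-point functions lie in $[-1,1]$; as the first factor is nonnegative this forces the summand to be at most $2\bigl(\smallavg{\sigma_x\sigma_y}-\smallavg{\sigma_x}\smallavg{\sigma_y}\bigr)$. Summing over $x,y$ gives
\[
\smallavg{R_{1,2}^2}-\smallavg{R_{1,2}}^2\le\frac{2}{|V_n|^2}\bigl(\smallavg{M_n^2}-\smallavg{M_n}^2\bigr).
\]
It therefore suffices to prove $\ee\bigl(\smallavg{M_n^2}-\smallavg{M_n}^2\bigr)\le h^{-1}|V_n|^{3/2}$; via the previous display this yields $\ee\bigl(\smallavg{R_{1,2}^2}-\smallavg{R_{1,2}}^2\bigr)\le 2h^{-1}|V_n|^{-1/2}$, which is even slightly sharper than the claimed bound since $\sqrt{2+h^2}\ge 1$.

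For a fixed realization of the disorder, $\fpar{\smallavg{M_n}}{g_x}=h\bigl(\smallavg{\sigma_xM_n}-\smallavg{\sigma_x}\smallavg{M_n}\bigr)$, so summing over $x$ gives $\sum_x\fpar{\smallavg{M_n}}{g_x}=h\bigl(\smallavg{M_n^2}-\smallavg{M_n}^2\bigr)$. Taking expectations and invoking the Gaussian integration-by-parts identity $\ee[g_x\varphi]=\ee\bigl[\fpar{\varphi}{g_x}\bigr]$ — legitimate here because $\smallavg{M_n}$ is a smooth, bounded function of the disorder with bounded gradient on the finite box $V_n$ — we get
\[
\ee\bigl(\smallavg{M_n^2}-\smallavg{M_n}^2\bigr)=\frac1h\,\ee\Bigl[\Bigl(\sum_{x\in V_n}g_x\Bigr)\smallavg{M_n}\Bigr]\le\frac1h\,\sqrt{|V_n|}\;\bigl(\ee\smallavg{M_n}^2\bigr)^{1/2}\le\frac{|V_n|^{3/2}}{h},
\]
using Cauchy--Schwarz, $\var\bigl(\sum_xg_x\bigr)=|V_n|$, and $\smallavg{M_n}^2\le\smallavg{M_n^2}\le|V_n|^2$. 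Combining this with the reduction above completes the proof.

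The substantive point is the last step. For a single disorder realization the Gibbs variance of $M_n$ can genuinely be of order $|V_n|^2$ (e.g.\ deep inside an ordered phase), so no pointwise-in-disorder estimate of the required order can hold; the gain of the factor $|V_n|^{-1/2}$ appears only after averaging over the Gaussian field, and this is exactly where the Gaussian structure of the disorder is indispensable — morally the same ingredient that produces the concentration bound of Lemma \ref{varbd}. (If one prefers to reproduce the stated constant $\sqrt{2+h^2}$ literally, one runs the same scheme with $M_n$ replaced by the magnetization of a Hamiltonian carrying an auxiliary uniform external field, using Lemma \ref{varbd} to control the associated free-energy increments; only the numerical constant changes.) The FKG inequality, for its part, is what makes the passage from the overlap to $M_n$ possible at all: without nonnegativity of the truncated two-point function the factorization step would retain no useful information.
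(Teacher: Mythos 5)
Your proof is correct, and its decisive step takes a genuinely different route from the paper's. Both arguments begin the same way: factor $\smallavg{\sigma_x\sigma_y}^2-\smallavg{\sigma_x}^2\smallavg{\sigma_y}^2$ as a product, use the FKG inequality to bound it by $2(\smallavg{\sigma_x\sigma_y}-\smallavg{\sigma_x}\smallavg{\sigma_y})$, and thereby reduce the lemma to showing that $\sum_{x,y}\ee(\smallavg{\sigma_x\sigma_y}-\smallavg{\sigma_x}\smallavg{\sigma_y})=O(|V_n|^{3/2})$. The paper reaches this by an $\ell^2$ route: it projects the centered free energy $F_n-\ee F_n$ onto the span of the products $g_xg_y$, deduces $\sum_{x,y}\bigl(\ee(\smallavg{\sigma_x\sigma_y}-\smallavg{\sigma_x}\smallavg{\sigma_y})\bigr)^2\le (2+h^2)|V_n|/h^2$ from the Poincar\'e bound $\var(F_n)\le h^2|V_n|$ of Lemma \ref{varbd}, and then applies Cauchy--Schwarz over the $|V_n|^2$ pairs. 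You instead recognize the sum of truncated correlations as the Gibbs variance of the magnetization $M_n$, write $h\,\ee(\smallavg{M_n^2}-\smallavg{M_n}^2)=\ee\bigl[\bigl(\sum_xg_x\bigr)\smallavg{M_n}\bigr]$ by a single Gaussian integration by parts, and apply Cauchy--Schwarz in the disorder using only $\ee\bigl(\sum_xg_x\bigr)^2=|V_n|$ and $|\smallavg{M_n}|\le|V_n|$. This is shorter, bypasses both the orthogonal-decomposition trick and Lemma \ref{varbd} entirely, and even yields the marginally sharper constant $2/(h\sqrt{|V_n|})$. What the paper's route buys in exchange is a stronger intermediate fact --- square-summability of the truncated two-point functions, i.e.\ $\ell^2$ rather than $\ell^1$ control --- though for the lemma as stated only the $\ell^1$ bound is needed. (Your closing parenthetical about recovering the literal constant $\sqrt{2+h^2}$ via an auxiliary uniform field is vague, but nothing in the argument rests on it.)
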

\begin{proof}
Fix $\beta$, $h$ and $n$, and let $F := F_n(\beta,h) - \ee(F_n(\beta,h))$. Let $W$ denote the set of all unordered pairs $\{x,y\}$, where $x,y\in V_n$ and $x\ne y$. For each $w = \{x,y\}\in W$, let $s_w := g_xg_y$, and let
\[
c_w := \ee(s_wF)\, .
\]
Define
\[
S := \sum_{w\in W} c_w s_w\, .
\]
Note that for any two distinct $w,w'\in W$, $\ee(s_ws_{w'}) =0$, and for any $w$, $\ee(s_w^2)=1$. Thus,
\[
\ee(S^2) = \sum_{w} c_w^2 = \sum_w \ee(c_w s_w F) = \ee(SF)\, .
\]
Consequently,
\[
\ee(F-S)^2 = \ee(F^2)-2\ee(FS)+\ee(S^2) = \ee (F^2)-\ee(S^2)\, .
\]
Combining the last two displays and applying Lemma \ref{varbd} gives
\begin{equation}\label{cw}
\sum_w c_w^2\le \ee(F^2)=\var (F_n(\beta, h))\le h^2 |V_n|\, .
\end{equation}
Applying integration by parts, for any $w=\{x,y\}$, 
\begin{align*}
c_w &= \ee(g_x g_y F) = \ee\biggl(\mpar{F}{g_x}{g_y}\biggr) = h^2 \ee(\smallavg{\sigma_x\sigma_y} - \smallavg{\sigma_x}\smallavg{\sigma_y})=: h^2 r_{x,y}\, .
\end{align*}
Therefore by \eqref{cw},  
\begin{align}\label{rxy}
\sum_{x,y\in V_n} r_{x,y}^2 \le  2\sum_{\{x,y\}\in W} r_{x,y}^2  + |V_n| \le \frac{(2+h^2)|V_n|}{h^2}\, .
\end{align}
Now note that
\begin{align*}
\ee(\smallavg{R_{1,2}^2} - \smallavg{R_{1,2}}^2) &= \frac{1}{|V_n|^2} \sum_{x,y} \ee(\smallavg{\sigma_x\sigma_y}^2 - \smallavg{\sigma_x}^2\smallavg{\sigma_y}^2)\\
&\le \frac{1}{|V_n|^2} \sum_{x,y} \ee|(\smallavg{\sigma_x\sigma_y} - \smallavg{\sigma_x}\smallavg{\sigma_y})(\smallavg{\sigma_x\sigma_y} + \smallavg{\sigma_x}\smallavg{\sigma_y})|\\
&\le \frac{2}{|V_n|^2} \sum_{x,y} \ee|\smallavg{\sigma_x\sigma_y} - \smallavg{\sigma_x}\smallavg{\sigma_y}|\, .
\end{align*}
But by the FKG property of the RFIM (Lemma \ref{fkg}), $\smallavg{\sigma_x\sigma_y} -\smallavg{\sigma_x}\smallavg{\sigma_y}\ge 0$ for each $x$ and $y$. Thus, by the above inequality and \eqref{rxy},
\begin{align*}
\ee(\smallavg{R_{1,2}^2} - \smallavg{R_{1,2}}^2) &\le \frac{2}{|V_n|^2} \sum_{x,y}r_{x,y} \\
&\le \frac{2}{|V_n|^2} \biggl(|V_n|^2 \sum_{x,y}r_{x,y}^2\biggr)^{1/2}\le \frac{2\sqrt{2+h^2}}{h\sqrt{|V_n|}}\, .
\end{align*}
This completes the proof of the lemma. 
\end{proof}
Next, define
\[
H_n := \frac{1}{|V_n|}\sum_{x\in V_n} g_x \sigma_x\, .
\]
Recall the set $A$ defined in the paragraph preceding Lemma~\ref{difflmm}. Let $A^c$ denote the complement of $A$ in $\rr^2$. 
\begin{lmm}\label{hnlmm}
For any $(\beta, h)\in A^c$, 
\[
\lim_{n\ra\infty} \ee\smallavg{H_n} = \fpar{p}{h}(\beta,h)
\]
and 
\[
\lim_{n\ra\infty}\ee|\smallavg{H_n}-\ee\smallavg{H_n}| = 0\, .
\]
\end{lmm}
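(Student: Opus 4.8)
The plan is to reduce both statements to properties of the $h$-derivative of $\psi_n$, and then exploit convexity in $h$ together with the variance bound of Lemma~\ref{varbd}. Differentiating $F_n=\log Z_n$ in $h$ gives $\fpar{F_n}{h}(\beta,h)=\smallavg{\sum_{x\in V_n} g_x\sigma_x}=\sum_{x\in V_n} g_x\smallavg{\sigma_x}$, so that
\[
\smallavg{H_n}=\frac{1}{|V_n|}\fpar{F_n}{h}(\beta,h)=\fpar{\psi_n}{h}(\beta,h)\,,
\]
and since $|\smallavg{\sigma_x}|\le 1$ we get the bound $|\smallavg{H_n}|\le |V_n|^{-1}\sum_{x\in V_n}|g_x|$, which holds uniformly in $\beta$ and $h$ and has finite expectation; hence (via the mean value theorem and dominated convergence) differentiating under the expectation is legitimate and $\ee\smallavg{H_n}=\fpar{p_n}{h}(\beta,h)$. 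I would also record here that $\psi_n(\beta,h)\to p(\beta,h)$ in $L^1$ for every $(\beta,h)$: the mean converges by Lemma~\ref{limlmm}, while $\var(\psi_n(\beta,h))=|V_n|^{-2}\var(F_n(\beta,h))\le h^2/|V_n|\to 0$ by Lemma~\ref{varbd}, so $\ee|\psi_n(\beta,h)-p(\beta,h)|\le \sqrt{\var(\psi_n(\beta,h))}+|p_n(\beta,h)-p(\beta,h)|\to 0$.

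The core of the argument is a convexity sandwich. Since $\psi_n$ is convex in $h$ (Lemma~\ref{convlmm}), for every $\delta\in(0,h)$,
\[
\frac{\psi_n(\beta,h)-\psi_n(\beta,h-\delta)}{\delta}\;\le\;\smallavg{H_n}\;\le\;\frac{\psi_n(\beta,h+\delta)-\psi_n(\beta,h)}{\delta}\,;
\]
call the left and right members $Q_n^-$ and $Q_n^+$. By the $L^1$-convergence just established, $Q_n^\pm\to q_\delta^\pm$ in $L^1$ as $n\to\infty$, where $q_\delta^+:=\delta^{-1}\bigl(p(\beta,h+\delta)-p(\beta,h)\bigr)$ and $q_\delta^-:=\delta^{-1}\bigl(p(\beta,h)-p(\beta,h-\delta)\bigr)$. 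Taking expectations in the sandwich and using Lemma~\ref{limlmm} gives $q_\delta^-\le\liminf_n\ee\smallavg{H_n}\le\limsup_n\ee\smallavg{H_n}\le q_\delta^+$; since $(\beta,h)\in A^c$ the function $h\mapsto p(\beta,h)$ is differentiable at $h$, so letting $\delta\to 0^+$ makes both $q_\delta^\pm$ converge to $\fpar{p}{h}(\beta,h)$, which proves the first assertion. For the second, the sandwich also gives the pointwise bound $\bigl|\smallavg{H_n}-\fpar{p}{h}(\beta,h)\bigr|\le\bigl|Q_n^+-\fpar{p}{h}(\beta,h)\bigr|+\bigl|Q_n^--\fpar{p}{h}(\beta,h)\bigr|$; taking expectations, letting first $n\to\infty$ (so $\ee|Q_n^\pm-q_\delta^\pm|\to 0$) and then $\delta\to 0^+$ yields $\ee\bigl|\smallavg{H_n}-\fpar{p}{h}(\beta,h)\bigr|\to 0$, and combining this with the first assertion via the triangle inequality gives $\ee|\smallavg{H_n}-\ee\smallavg{H_n}|\to 0$.

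The one point requiring care is the concentration of $\smallavg{H_n}$ in the second assertion: Lemma~\ref{varbd} controls the fluctuations of $F_n$, not of its $h$-derivative, so there is no direct variance estimate for $\smallavg{H_n}$. The sandwich is exactly the device that transfers concentration of $\psi_n$ to concentration of its derivative at a point where the limit $p$ is differentiable in $h$ --- the familiar principle that pointwise convergence of convex functions upgrades to convergence of derivatives at points of differentiability of the limit. Everything else (smoothness of $\psi_n$ in $h$, the differentiation-under-the-integral step, and the elementary manipulations with difference quotients) is routine.
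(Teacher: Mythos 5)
Your proposal is correct and follows essentially the same route as the paper: identify $\smallavg{H_n}$ with $\partial\psi_n/\partial h$, trap it between convex difference quotients, use the variance bound of Lemma~\ref{varbd} to get $L^1$-convergence of those quotients to the corresponding quotients of $p$, and invoke differentiability of $p$ at $h$ (i.e.\ $(\beta,h)\in A^c$) to close the gap. The only cosmetic difference is that you run the upper and lower quotients simultaneously as a two-sided sandwich, whereas the paper treats the positive and negative parts of $\smallavg{H_n}-p'(\beta,h)$ in two symmetric steps.
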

\begin{proof}
Note that
\begin{equation*}\label{diff}
\smallavg{H_n} = \fpar{\psi_n}{h} \ \text{ and } \  \ee\smallavg{H_n} = \fpar{p_n}{h}\, .  
\end{equation*}
Fix some $h' > h>0$. By the convexity of $\psi_n$ (Lemma \ref{convlmm}) and the first identity in the above display, 
\begin{equation}\label{upbd}
\smallavg{H_n} \le \frac{\psi_n(\beta,h')-\psi_n(\beta,h)}{h'-h}. 
\end{equation}
By Lemma \ref{varbd}, 
\begin{equation*}\label{conc}
\ee|\psi_n(\beta,h) - p_n(\beta,h)| \le \sqrt{\var(\psi_n(\beta,h))}\le  \frac{h}{\sqrt{|V_n|}}\,.
\end{equation*}
Therefore, if $p'$ denotes the function $\partial p/\partial h$, then
\begin{align*}
&\ee\biggl| \frac{\psi_n(\beta, h')-\psi_n(\beta, h)}{h'-h} - p'(\beta,h)\biggr| \\
&\le \frac{h + h'}{h'-h}\frac{1}{\sqrt{|V_n|}}  + \frac{|p_n(\beta, h') - p(\beta, h')| + |p_n(\beta, h) - p(\beta,h)|}{h'-h} \\
&\qquad + \biggl|\frac{p(\beta,h') - p(\beta, h)}{h'-h} - p'(\beta,h)\biggr|\, . 
\end{align*}
Since $|V_n|\ra \infty$ and $p_n(\beta,h)\ra p(\beta,h)$ by Lemma \ref{limlmm}, this gives
\begin{align*}
\limsup_{n\ra \infty} \ee\biggl| \frac{\psi_n(\beta, h')-\psi_n(\beta,h)}{h'-h} - p'(\beta,h)\biggr|&\le \biggl|\frac{p(\beta, h') - p(\beta,h)}{h'-h} - p'(\beta,h)\biggr|.
\end{align*}
Combining with \eqref{upbd}, we get
\begin{align*}
\limsup_{n\ra \infty} \ee(\smallavg{H_n}-p'(\beta, h))^+ &\le \biggl|\frac{p(\beta, h') - p(\beta, h)}{h'-h} - p'(\beta,h)\biggr|\, ,
\end{align*}
where $x^+$ denotes the positive part of a real number $x$. Since this bound holds for any $h' > h$ and $p$ is differentiable at $h$, 
\[
\lim_{n\ra \infty} \ee(\smallavg{H_n}-p'(\beta,h))^+ = 0\,. 
\]
Similarly, considering $h' < h$ and repeating the steps, we can show that the limit of the negative part is zero as well. Thus,
\[
\lim_{n\ra \infty} \ee|\smallavg{H_n}-p'(\beta,h)| = 0\,. 
\]
By Jensen's inequality, this gives
\[
\lim_{n\ra \infty}  |\ee\smallavg{H_n}-p'(\beta,h)| = 0\,. 
\]
The proof is now easily completed by combining the last two inequalities.
\end{proof}

\begin{lmm}\label{intparts}
For any $\beta$, $h$ and $n$,
\[
\sum_{x, y} \biggl(\ee\biggl(\frac{\partial^4 F_n}{\partial g_x^2 \partial g_{y}^2}\biggr)\biggr)^2\le 24h^2|V_n|\, .
\]
\end{lmm}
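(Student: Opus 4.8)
The plan is to mimic the structure of the proof of Lemma~\ref{mainlmm}, but applied to the fourth derivatives of $F_n$ rather than the second derivatives. Fix $\beta$, $h$ and $n$, and again set $F := F_n(\beta,h) - \ee(F_n(\beta,h))$. The point is that for each unordered pair $w = \{x,y\}$ the quantity $\ee(\partial^4 F_n/\partial g_x^2 \partial g_y^2)$ should, by Gaussian integration by parts applied in reverse, be expressible as $\ee(\mathrm{He}_2(g_x)\mathrm{He}_2(g_y)F)$, where $\mathrm{He}_2(t) = t^2 - 1$ is the second Hermite polynomial; more generally one needs the identity $\ee(\partial^k F_n/\partial g_x^k) = \ee(\mathrm{He}_k(g_x) F_n)$ for a nice function $F_n$, and its bivariate analogue. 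So first I would define $s_w := \mathrm{He}_2(g_x)\mathrm{He}_2(g_y) = (g_x^2-1)(g_y^2-1)$ for $w = \{x,y\}$, note that $\{s_w\}_{w\in W}$ is an orthogonal family in $L^2$ (since distinct pairs involve at least one independent coordinate on which $\mathrm{He}_2$ integrates to zero), and compute $\ee(s_w^2) = \ee(\mathrm{He}_2(g_x)^2)\ee(\mathrm{He}_2(g_y)^2) = 2\cdot 2 = 4$.

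Next I would set $c_w := \ee(s_w F)$ and $S := \sum_{w\in W} c_w s_w$, exactly as before. Orthogonality gives $\ee(S^2) = \sum_w c_w^2 \ee(s_w^2) = 4\sum_w c_w^2$, and also $\ee(SF) = \sum_w c_w \ee(s_w F) = \sum_w c_w^2$, so $\ee(S^2) = 4\,\ee(SF)$. Then, using $\ee(F-S)^2 = \ee(F^2) - 2\ee(FS) + \ee(S^2) \ge 0$ together with these identities, one gets $2\ee(FS) \le \ee(F^2)$, i.e. $\sum_w c_w^2 \le \tfrac12 \ee(F^2) = \tfrac12 \var(F_n) \le \tfrac12 h^2 |V_n|$ by Lemma~\ref{varbd}. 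The last step is to relate $c_w$ back to the fourth derivative: by the inverse integration-by-parts identity, $c_w = \ee(\mathrm{He}_2(g_x)\mathrm{He}_2(g_y) F_n) = \ee(\partial^4 F_n/\partial g_x^2 \partial g_y^2)$ for each $w = \{x,y\}$ with $x\ne y$; finally, passing from the sum over unordered pairs $W$ to the sum over ordered pairs $(x,y)$ with $x,y \in V_n$ costs a factor of at most $2$ on the off-diagonal, plus a diagonal contribution. The diagonal terms $\ee(\partial^4 F_n/\partial g_x^4)$ must be bounded separately: each is $\ee(\mathrm{He}_4(g_x) F_n)$, hence by Cauchy--Schwarz bounded by $\|\mathrm{He}_4\|_2 \sqrt{\var(F_n)}$ — but this would give a bound growing like $|V_n|$ when squared and summed over $x$, so more care is needed; one instead bounds $\partial^4 F_n/\partial g_x^4$ pointwise in terms of $h^4$ and cumulants of $\sigma_x$ under the Gibbs measure (which are $O(h^4)$ uniformly since $|\sigma_x|\le 1$), giving $\sum_x (\ee \partial^4 F_n/\partial g_x^4)^2 \le C h^8 |V_n|$, which is absorbed into the stated constant after adjusting. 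Combining the off-diagonal bound $2^2 \cdot \tfrac12 h^2 |V_n| \cdot(\text{a small factor})$ with the diagonal bound and checking that the total does not exceed $24 h^2 |V_n|$ finishes the proof.

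The main obstacle I anticipate is getting the constant and the diagonal terms exactly right. The off-diagonal mechanism is clean and robust (it's the same trick as in Lemma~\ref{mainlmm}), but the identity $\ee(\partial^4 F_n/\partial g_x^2\partial g_y^2) = \ee((g_x^2-1)(g_y^2-1)F_n)$ needs to be justified carefully — it is the statement that projecting $F_n$ onto the degree-$(2,2)$ Hermite component in the $(g_x,g_y)$ variables recovers exactly this mixed fourth derivative's expectation — and one must check the constant $4 = \ee(s_w^2)$ propagates correctly through $\sum_w c_w^2 \le \tfrac12 h^2|V_n|$, so that $\sum_{x\ne y}(\ee \partial^4 F_n/\partial g_x^2\partial g_y^2)^2 \le 2 \sum_w c_w^2 \le h^2 |V_n|$. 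Then a somewhat crude but honest bound on the $|V_n|$ diagonal terms — using that $\partial^4 F_n/\partial g_x^4$ is $h^4$ times a fourth cumulant of $\sigma_x$, hence bounded in absolute value by a universal constant times $h^4$, and noting $h^4 \le h^2 \cdot h^2$ is the wrong direction so one keeps it as is and checks $h^8$ vs $h^2$ only matters for fixed $(\beta,h)$ — must be combined so the grand total lands at or below $24 h^2 |V_n|$; I expect the "$24$" to come from $(2+\text{diagonal factor})^2$-type bookkeeping, with $24 = 4! $ suggesting the fourth-Hermite normalization $\ee(\mathrm{He}_4(g)^2) = 24$ is the real source, so the cleanest route is probably to treat diagonal and off-diagonal uniformly via a single family $\{s_w\}$ indexed also by singletons $w = \{x\}$ with $s_{\{x\}} = \mathrm{He}_2(g_x)$ or $\mathrm{He}_4(g_x)$ and let the orthogonality argument produce $24 h^2|V_n|$ directly.
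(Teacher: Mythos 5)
Your core idea --- expand $F$ in the degree-two Hermite chaos in pairs of coordinates, use Bessel's inequality against $\var(F_n)\le h^2|V_n|$, and convert the chaos coefficients to mixed fourth derivatives by Gaussian integration by parts --- is exactly the paper's approach. But two steps as written do not go through. First, the off-diagonal bound: with the unnormalized $s_w=(g_x^2-1)(g_y^2-1)$ you have $\ee(S^2)=4\,\ee(SF)$, so $\ee(F-S)^2=\ee(F^2)-2\ee(FS)+\ee(S^2)=\ee(F^2)+2\ee(FS)\ge 0$, which gives a \emph{lower} bound on $\ee(FS)=\sum_w c_w^2$, not the upper bound $\sum_w c_w^2\le\tfrac12\ee(F^2)$ you claim. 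The correct Bessel inequality here is $\sum_w c_w^2/\ee(s_w^2)\le\ee(F^2)$, i.e.\ $\sum_w c_w^2\le 4h^2|V_n|$ (obtained, e.g., from $\ee(F-\tfrac14 S)^2\ge 0$, or by normalizing $s_w$ to unit variance as the paper does with $t_u:=\tfrac12(g_x^2-1)(g_y^2-1)$). Your factor $\tfrac12$ is unjustified and the inequality is in the wrong direction as derived.

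Second, and more seriously, your primary treatment of the diagonal fails: bounding $\ee(\partial^4F_n/\partial g_x^4)$ pointwise by $Ch^4$ via cumulants gives $\sum_x(\cdot)^2\le C^2h^8|V_n|$, which is \emph{not} $\le 24h^2|V_n|$ uniformly in $h$ (it blows up for large $h$), so it cannot be ``absorbed into the stated constant.'' Moreover, even with the corrected off-diagonal bound $8h^2|V_n|$ plus a separate diagonal Bessel bound $24h^2|V_n|$ you would get $32h^2|V_n|$, overshooting the target. The only way to land exactly at $24h^2|V_n|$ is the route you mention only in your closing sentence as ``probably the cleanest'': form a \emph{single} orthonormal family containing both $\tfrac12(g_x^2-1)(g_y^2-1)$ for $x\ne y$ and $\tfrac{1}{\sqrt{24}}(g_x^4-6g_x^2+3)$ on the diagonal, apply Bessel once to get $\sum_u d_u^2\le h^2|V_n|$, and note that unpacking the normalizations multiplies every squared coefficient by at most $24$ (the off-diagonal ordered terms contribute $8d_u^2$ each, the diagonal $24d_u^2$). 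That is precisely the paper's proof; you identified it but did not carry it out, and the argument you actually wrote down does not prove the stated inequality.
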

\begin{proof}
This proof is very similar to that of Lemma \ref{mainlmm}. Let $F:= F_n(\beta,h)-\ee(F_n(\beta,h))$. Let $U$ be the set of all unordered pairs $\{x,y\}$, where $x,y\in V_n$. For any $u=\{x,y\}\in U$, define
\[
t_u :=
\begin{cases}
\frac{1}{2}(g_x^2-1)(g_y^2-1) &\text{ if } x\ne y\, ,\\
\frac{1}{\sqrt{24}}(g_x^4-6g_x^2+3) &\text{ if } x=y\, .
\end{cases}
\]
Define $d_u := \ee(t_uF)$  
and 
\[
T := \sum_{u\in U} d_u t_u\, .
\]
An easy computation shows that for any two distinct $u,u'\in U$, $\ee(t_ut_{u'}) =0$, and for any $u$, $\ee(t_u^2)=1$. Thus,
\[
\ee(T^2) = \sum_{u} d_u^2 = \sum_u \ee(d_u t_u F) = \ee(TF)\, .
\]
Consequently,
\[
\ee(F-T)^2 = \ee(F^2)-2\ee(FT)+\ee(T^2) = \ee (F^2)-\ee(T^2)\, .
\]
Combining the last two displays and applying Lemma \ref{varbd} gives
\begin{equation*}
\sum_u d_u^2\le \ee(F^2)=\var (F_n(\beta, h))\le h^2 |V_n|\, .
\end{equation*}
Now, integration by parts shows that if $u=\{x,y\}$ for some $x\ne y$,  
\begin{align*}
d_u &= \frac{1}{2}\,\ee\biggl(\frac{\partial^4 F_n}{\partial g_x^2 \partial g_y^2}\biggr)\, ,
\end{align*}
and for $u=\{x,x\}$, 
\begin{align*}
d_u &= \frac{1}{\sqrt{24}}\,\ee\biggl(\frac{\partial^4 F_n}{\partial g_x^4}\biggr)\, .
\end{align*}
This completes the proof of the lemma. 
\end{proof}
\begin{lmm}\label{hnlmm2}
For any $(\beta,h)\in A^c$,
\[
\lim_{n\ra\infty}\ee\smallavg{|H_n - \ee\smallavg{H_n}|} =0\, .
\]
\end{lmm}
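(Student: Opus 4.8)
The plan is to reduce the claim to an estimate on an expected Gibbs variance and then to bound that variance using Lemma~\ref{intparts}. First, since $\smallavg{H_n} - \ee\smallavg{H_n}$ does not depend on the spin configuration, the triangle inequality gives
\[
\ee\smallavg{|H_n - \ee\smallavg{H_n}|} \le \ee\smallavg{|H_n - \smallavg{H_n}|} + \ee|\smallavg{H_n} - \ee\smallavg{H_n}|\,.
\]
For $(\beta,h) \in A^c$ the second term on the right tends to $0$ as $n \to \infty$ by Lemma~\ref{hnlmm}, and this is the only place where the hypothesis $(\beta,h) \in A^c$ enters. So the task reduces to showing $\ee\smallavg{|H_n - \smallavg{H_n}|} \to 0$, which I expect to hold for every $(\beta,h) \in (0,\infty)^2$.

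For that term, Cauchy--Schwarz applied inside the Gibbs measure and then Jensen's inequality in the disorder give
\[
\ee\smallavg{|H_n - \smallavg{H_n}|} \le \ee\sqrt{\smallavg{H_n^2} - \smallavg{H_n}^2} \le \sqrt{\ee\bigl(\smallavg{H_n^2} - \smallavg{H_n}^2\bigr)}\,,
\]
so it is enough to show $\ee(\smallavg{H_n^2} - \smallavg{H_n}^2) \to 0$. Expanding the definition of $H_n$,
\[
\smallavg{H_n^2} - \smallavg{H_n}^2 = \frac{1}{|V_n|^2}\sum_{x,y \in V_n} g_x g_y\, G_{x,y}, \qquad G_{x,y} := \smallavg{\sigma_x\sigma_y} - \smallavg{\sigma_x}\smallavg{\sigma_y}\,,
\]
and, exactly as in the proof of Lemma~\ref{mainlmm}, differentiating $F_n$ twice in the disorder variables gives $\partial^2 F_n/\partial g_x \partial g_y = h^2 G_{x,y}$ for all $x,y$ (here $G_{x,x} = 1 - \smallavg{\sigma_x}^2$, since $\sigma_x^2 = 1$).

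The next step is to apply Gaussian integration by parts twice in each summand of $\sum_{x,y}\ee(g_x g_y G_{x,y})$; this is legitimate because $F_n$ is a smooth function of the disorder with all partial derivatives of at most polynomial growth. Using $G_{x,y} = h^{-2}\,\partial^2 F_n/\partial g_x \partial g_y$, this produces, for each $x,y$,
\[
\ee(g_x g_y\, G_{x,y}) = \delta_{x,y}\,\ee\bigl(1 - \smallavg{\sigma_x}^2\bigr) + \frac{1}{h^2}\,\ee\Bigl(\frac{\partial^4 F_n}{\partial g_x^2 \partial g_y^2}\Bigr)\,.
\]
Summing over $x,y$ and dividing by $|V_n|^2$: the diagonal part is at most $|V_n|^{-1}$, and by the Cauchy--Schwarz inequality and Lemma~\ref{intparts} the remaining part is at most $h^{-2}|V_n|^{-2}\cdot|V_n|\,(24h^2|V_n|)^{1/2} = \sqrt{24}\,h^{-1}|V_n|^{-1/2}$. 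Thus $\ee(\smallavg{H_n^2} - \smallavg{H_n}^2) \le |V_n|^{-1} + \sqrt{24}\,h^{-1}|V_n|^{-1/2} \to 0$, as required.

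There is no serious obstacle here; the one substantive observation is in the last paragraph, namely that the disorder-averaged Gibbs variance of $H_n$ is, after two Gaussian integrations by parts, precisely the quantity estimated in Lemma~\ref{intparts}, so that lemma is exactly what is needed. The diagonal correction term, the justification of integration by parts, and the two uses of Cauchy--Schwarz are routine, and I do not anticipate any real difficulty once this identification is in hand.
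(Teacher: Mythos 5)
Your proposal is correct and follows essentially the same route as the paper: the same triangle-inequality split handled by Lemma~\ref{hnlmm}, the same Cauchy--Schwarz/Jensen reduction to $\ee(\smallavg{H_n^2}-\smallavg{H_n}^2)$, and the same double Gaussian integration by parts (with the diagonal correction) feeding into Lemma~\ref{intparts}. The resulting bound $\sqrt{24}\,h^{-1}|V_n|^{-1/2}+|V_n|^{-1}$ matches the paper's exactly.
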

\begin{proof}
Integration by parts gives
\begin{align*}
\ee(\smallavg{H_n^2} - \smallavg{H_n}^2)&= \frac{1}{|V_n|^2}\sum_{x,y\in V_n} \ee(g_x g_y (\smallavg{\sigma_x \sigma_y} - \smallavg{\sigma_x}\smallavg{\sigma_y}))\\
&= \frac{1}{|V_n|^2}\sum_{x,y} \ee\biggl(\mpar{}{g_x}{g_{y}} (\smallavg{\sigma_x \sigma_y} - \smallavg{\sigma_x}\smallavg{\sigma_y})\biggr) \\
&\qquad + \frac{1}{|V_n|^2}\sum_{x} \ee(1-\smallavg{\sigma_x}^2)\, . 
\end{align*}
Now note that
\[
\smallavg{\sigma_x \sigma_y} - \smallavg{\sigma_x}\smallavg{\sigma_y} = \frac{1}{h^2}\mpar{F_n}{g_x}{g_{y}}\, .
\]
Thus, by the Cauchy-Schwarz inequality and Lemma~\ref{intparts}, 
\begin{align*}
\ee(\smallavg{H_n^2} - \smallavg{H_n}^2) &\le \frac{1}{h^2|V_n|^2}\sum_{x,y} \ee\biggl(\frac{\partial^4 F_n}{\partial g_x^2 \partial g_{y}^2}\biggr) + \frac{1}{|V_n|}\\
&\le \frac{1}{h^2|V_n|^2 } \biggl(|V_n|^2\sum_{x, y} \biggl(\ee\biggl(\frac{\partial^4 F_n}{\partial g_x^2 \partial g_{y}^2}\biggr)\biggr)^2\biggr)^{1/2} + \frac{1}{|V_n|}\\
&\le \frac{\sqrt{24}}{h\sqrt{|V_n|}} + \frac{1}{|V_n|}\, . 
\end{align*}
Combining this with the inequality
\[
\ee\smallavg{|H_n- \smallavg{H_n}|} \le \sqrt{\ee\smallavg{(H_n- \smallavg{H_n})^2}} = \sqrt{\ee(\smallavg{H_n^2} - \smallavg{H_n}^2)}
\]
implies that $\ee\smallavg{|H_n- \smallavg{H_n}|} \ra 0$. Since $(\beta, h)\in A^c$, we can now apply Lemma \ref{hnlmm} to complete the proof. 
\end{proof}
Take any integer $k\ge 2$ and let $\sigma^1,\ldots, \sigma^k, \sigma^{k+1}$ denote $k+1$ spin configurations drawn independently from the Gibbs measure.  Let $R_{i,j}$ denote the overlap between $\sigma^i$ and $\sigma^j$. Let $f$ be a function of these overlaps, that takes value in $[-1,1]$ and does not change with $n$. The following result shows that the RFIM satisfies the {\it Ghirlanda-Guerra identities} \cite{ghirlandaguerra98} (also sometimes called the {\it Aizenman-Contucci identities} \cite{aizenmancontucci98}) at almost all $(\beta,h)$. 
\begin{lmm}\label{ggid}
If $(\beta,h)\in A^c$, and $f$ is as above, then
\[
\lim_{n\ra\infty} \biggl(\ee\smallavg{f R_{1,k+1}} - \frac{1}{k}  \ee\smallavg{f} \ee\smallavg{R_{1,2}}- \frac{1}{k}\sum_{i=2}^k \ee\smallavg{fR_{1,i}}\biggr)=0\, . 
\]
\end{lmm}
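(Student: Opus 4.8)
The plan is to derive the Ghirlanda--Guerra identities in the standard way, via a concentration-of-free-energy argument applied to the perturbed Hamiltonian, exploiting Lemma~\ref{hnlmm2} as the crucial input. The quantity $H_n = |V_n|^{-1}\sum_x g_x\sigma_x$ is, up to the factor $h$, the $h$-derivative of the Hamiltonian per site; the content of Lemma~\ref{hnlmm2} is precisely that $H_n$ self-averages under the joint measure $\ee\smallavg{\cdot}$. This is exactly the hypothesis from which the Ghirlanda--Guerra identities are known to follow in mean-field spin glasses, so the proof here is an adaptation of that classical derivation (see e.g.\ \cite{ghirlandaguerra98} or the textbook treatments).

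First I would relate the observable $H_n$ to the overlaps. A direct Gaussian integration-by-parts computation shows that for any bounded function $f$ of finitely many replica overlaps,
\begin{align*}
\ee\smallavg{f H_n} &= h\,\ee\smallavg{f\, R_{1,1}} - h\sum_{i=1}^{k}\ee\smallavg{f\,R_{i,k+1}} + h\,k\,\ee\smallavg{f}\,\ee\smallavg{H_n}\Big/h\ ,
\end{align*}
where $R_{1,1}=1$ since $\sigma_x^2=1$; more carefully, differentiating $\smallavg{f}$ in $g_x$ brings down factors $h(\sigma_x^i - \smallavg{\sigma_x^{k+1}})$ for each of the $k$ replicas appearing in $f$ plus the freshly introduced replica $\sigma^{k+1}$, and summing over $x$ and dividing by $|V_n|$ turns these into overlaps. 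Collecting terms, one gets an identity of the schematic form
\[
\ee\smallavg{f\,H_n} - \ee\smallavg{f}\,\ee\smallavg{H_n} = h\Bigl(k\,\ee\smallavg{f R_{1,k+1}} - \ee\smallavg{f}\,\ee\smallavg{R_{1,2}} - \sum_{i=2}^{k}\ee\smallavg{f R_{1,i}}\Bigr) + o(1),
\]
where I have used symmetry among replicas and the $o(1)$ absorbs boundary/diagonal terms of order $|V_n|^{-1}$. (Here the replica $\sigma^1$ plays the distinguished role; one writes $f$ as a function of $R_{1,i}$ for $i\le k$ and the new replica is $\sigma^{k+1}$.)

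Next I would bound the left-hand side. Since $|f|\le 1$,
\[
\bigl|\ee\smallavg{f H_n} - \ee\smallavg{f}\,\ee\smallavg{H_n}\bigr| \le \ee\smallavg{|H_n - \ee\smallavg{H_n}|} + \bigl|\ee\smallavg{f}\bigr|\cdot\bigl|\ee\smallavg{H_n}-\ee\smallavg{H_n}\bigr|\le \ee\smallavg{\,|H_n-\ee\smallavg{H_n}|\,},
\]
using that $\smallavg{\cdot}$ applied to $f$ times a constant factors, and $|\ee\smallavg{f H_n} - \ee\smallavg{f}\ee\smallavg{H_n}| \le \ee\smallavg{|f|\,|H_n - \ee\smallavg{H_n}|}\le \ee\smallavg{|H_n-\ee\smallavg{H_n}|}$. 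By Lemma~\ref{hnlmm2}, for $(\beta,h)\in A^c$ this tends to $0$ as $n\to\infty$. Dividing the displayed identity by $h>0$ and passing to the limit then yields exactly the claimed statement.

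The main obstacle is the careful bookkeeping in the integration-by-parts step: one must track which replicas $f$ depends on, account for the newly introduced replica $\sigma^{k+1}$ correctly, verify that the diagonal terms (where the differentiation in $g_x$ hits the same replica twice, producing $\smallavg{\sigma_x^i\sigma_x^j} $ with $i=j$, i.e.\ contributions to $R_{i,i}=1$) and the self-overlap terms are either absorbed into the stated combination or are $O(|V_n|^{-1})$, and confirm the precise combinatorial coefficients $1/k$ after using replica symmetry. None of this is deep, but it is the one place where an error would change the form of the identity. Everything else is a direct consequence of Lemma~\ref{hnlmm2} together with the triangle inequality and $|f|\le 1$.
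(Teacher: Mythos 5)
Your proposal follows essentially the same route as the paper: Gaussian integration by parts expresses $\ee\smallavg{f H_n(\sigma^1)}-\ee\smallavg{f}\,\ee\smallavg{H_n}$ as $\pm h$ times the Ghirlanda--Guerra combination, and the concentration statement of Lemma~\ref{hnlmm2} forces that quantity to vanish, giving the claim after dividing by $hk$. Your intermediate displays contain bookkeeping slips (the first is garbled, and the second has the opposite sign from the paper's identity, which is in fact exact with no $o(1)$ correction), but since only the vanishing of the combination is used, these do not affect the argument.
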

Given a result like Lemma \ref{hnlmm2}, the derivation of the Ghirlanda-Guerra identities is quite standard \cite{talagrand03}. As usual, details are presented below for the sake of completeness.
\begin{proof}
Integration by parts gives 
\begin{align*}
\ee\smallavg{H_n(\sigma^1) f} &= \frac{1}{|V_n|} \sum_x \ee(g_x \smallavg{\sigma_x^1 f})\\
&=  \frac{1}{|V_n|} \sum_x \ee\biggl(\fpar{\smallavg{\sigma_x^1 f}}{g_x}\biggr)\\
&= h\sum_{i=1}^k\ee\smallavg{fR_{1,i}} - hk \ee\smallavg{f R_{1,k+1}}\, .
\end{align*}
In particular,
\begin{equation}\label{ehn}
\ee\smallavg{H_n} = h(1-\ee\smallavg{R_{1,2}})\, .
\end{equation}
Therefore,
\begin{align*}
&\ee\smallavg{H_n(\sigma^1) f} - \ee\smallavg{H_n}\ee\smallavg{f} \\
&= h\ee\smallavg{R_{1,2}} \ee\smallavg{f} + h\sum_{i=2}^k \ee\smallavg{fR_{1,i}} - hk \ee\smallavg{f R_{1,k+1}}\, .
\end{align*}
To complete the proof, note that by Lemma \ref{hnlmm}, the left-hand side in the above display tends to zero as $n\ra\infty$.
\end{proof}
\begin{proof}[Proof of Theorem \ref{rsbthm}]
We now have all the ingredients necessary for completing the proof of Theorem \ref{rsbthm}. The main ingredient is Lemma \ref{mainlmm}. This needs to be combined with the Ghirlanda-Guerra identities and a nice trick of Guerra \cite{guerra96} which also appears in Talagrand~\cite{talagrand03}. 

Choosing $k=2$ and $f = R_{1,2}$ in Lemma \ref{ggid} gives 
\begin{align}\label{gg1}
\lim_{n\ra\infty} \biggl(\ee\smallavg{R_{1,2}R_{1,3}} - \frac{1}{2}  (\ee\smallavg{R_{1,2}})^2- \frac{1}{2}\ee\smallavg{R_{1,2}^2}\biggr)=0. 
\end{align}
Choosing $k=3$ and $f=R_{2,3}$ gives
\begin{align}\label{gg2}
\lim_{n\ra\infty} \biggl(\ee\smallavg{R_{2,3} R_{1,4}} - \frac{1}{3}  (\ee\smallavg{R_{1,2}})^2- \frac{1}{3}\ee\smallavg{R_{2,3}R_{1,2}} - \frac{1}{3}\ee\smallavg{R_{2,3}R_{1,3}}\biggr)=0\, . 
\end{align}
Now, since by symmetry
\[
\ee\smallavg{R_{2,3}R_{1,2}} = \ee\smallavg{R_{2,3}R_{1,3}} = \ee\smallavg{R_{1,3}R_{1,2}}\, ,
\]
we can use \eqref{gg1} to evaluate the last two terms in \eqref{gg2}, to get
\[
\lim_{n\ra\infty} \biggl(\ee\smallavg{R_{2,3} R_{1,4}} - \frac{2}{3}  (\ee\smallavg{R_{1,2}})^2- \frac{1}{3}\ee\smallavg{R_{1,2}^2}\biggr)=0\, . 
\]
Since $\smallavg{R_{2,3}R_{1,4}} = \smallavg{R_{1,2}}^2$, we can now combine the above display with Lemma \ref{mainlmm} to have
\begin{align*}
\lim_{n\ra\infty} (\ee\smallavg{R_{1,2}^2} - (\ee\smallavg{R_{1,2}})^2)=0\, .
\end{align*}
But the term on the left-hand side is nothing but $\ee\smallavg{(R_{1,2}-\ee\smallavg{R_{1,2}})^2}$. To complete the proof, note that by \eqref{ehn} and Lemma \ref{hnlmm}, $\lim_{n\ra\infty}\ee\smallavg{R_{1,2}}$ exists.
\end{proof}

\vskip.2in
\noindent {\bf Acknowledgments.} I thank Giorgio Parisi, Federico Ricci-Tersenghi and the anonymous referees for helpful comments and suggestions.

\end{document}